\def\0{{\mathbf 0}}
\def\1{{\mathbf 1}}
\def\e{{\mathbf e}}
\def\u{{\mathbf u}}
\def\x{{\mathbf x}}
\def\y{{\mathbf y}}
\def\A{{\mathbf A}}
\def\B{{\mathbf B}}
\def\C{{\mathbf C}}
\def\D{{\mathbf D}}
\def\E{{\mathbf E}}
\def\H{{\mathbf H}}
\def\I{{\mathbf I}}
\def\L{{\mathbf L}}
\def\M{{\mathbf M}}
\def\P{{\mathbf P}}
\def\S{{\mathbf S}}
\def\U{{\mathbf U}}
\def\W{{\mathbf W}}
\def\ie{{\textit{i.e.}}}
\def\eg{{\textit{e.g.}}}
\def\cE{{\mathcal E}}
\def\cG{{\mathcal G}}
\def\cN{{\mathcal N}}
\def\cO{{\mathcal O}}
\def\cT{{\mathcal T}}
\def\cV{{\mathcal V}}
\def\bLambda{{\boldsymbol \Lambda}}
\newtheorem{proposition}{Proposition}
\newtheorem{lemma}{Lemma}
\title{Efficient Directed Graph Sampling via Gershgorin Disc Alignment}
\name{
    Yuejiang Li$^\star$, 
    H. Vicky Zhao$^\star$, 
    Gene Cheung$^\dag$\thanks{Gene Cheung acknowledges the support of the NSERC grants RGPIN-2019-06271,  RGPAS-2019-00110.}
}
\address{
  $^\star$Dept. of Automation, Tsinghua University, Beijing, China\\ 
  $^\dag$York University, Toronto, Canada
}
\begin{document}
\ninept
\maketitle
\begin{abstract}
Graph sampling is the problem of choosing a node subset via sampling matrix $\H \in \{0,1\}^{K \times N}$ to collect samples $\y = \H \x \in \mathbb{R}^K$, $K < N$, so that the target signal $\x \in \mathbb{R}^N$ can be reconstructed in high fidelity. 
While sampling on undirected graphs is well studied, we propose the first sampling scheme tailored specifically for directed graphs, leveraging a previous undirected graph sampling method based on Gershgorin disc alignment (GDAS). 
Concretely, given a directed positive graph $\cG^d$ specified by random-walk graph Laplacian matrix $\L_{rw}$, we first define reconstruction of a smooth signal $\x^*$ from samples $\y$ using graph shift variation (GSV) $\|\L_{rw} \x\|^2_2$ as a signal prior.
To minimize worst-case reconstruction error of the linear system solution $\x^* = \C^{-1} \H^\top \y$ with symmetric coefficient matrix $\C = \H^\top \H + \mu \L_{rw}^\top \L_{rw}$, the sampling objective is to choose $\H$ to maximize the smallest eigenvalue $\lambda_{\min}(\C)$ of $\C$.
To circumvent eigen-decomposition entirely, we maximize instead a lower bound $\lambda^-_{\min}(\S\C\S^{-1})$ of $\lambda_{\min}(\C)$---smallest Gershgorin disc left-end of a similarity transform of $\C$---via a variant of GDAS based on Gershgorin circle theorem (GCT).
Experimental results show that our sampling method yields smaller signal reconstruction errors at a faster speed compared to competing schemes.
\end{abstract}
\begin{keywords}
Graph signal processing, 
signal sampling,
Gershgorin circle theorem
\end{keywords}
\section{Introduction}
\label{sec:intro}



\textit{Graph signal processing} (GSP) extends traditional signal processing tools to analyze signals on irregular data kernels described by finite graphs \cite{ortega2018graph,cheung18}. 
Most existing GSP works consider undirected graph structures, where each edge connecting two nodes is bidirectional. 
However, \textit{directionality} plays an important role in many practical information dissemination scenarios \cite{marques2020signal}. 
For example, on Twitter, a celebrity often has a large following but personally follows very few users \cite{anger2011measuring}. 
Thus, in these scenarios it is critical to factor directionality into the network model, resulting in a directed graph.

\textit{Graph sampling} selects a node subset to collect samples, so that the target signal can be recovered in high fidelity \cite{tanaka2020sampling}.
Existing graph sampling works can be classified into two categories based on prior assumptions: 
i) an assumption on strict bandlimitedness of target signals with a cutoff frequency, and ii) a more general assumption assuming target signals are ``smooth" with respect to (w.r.t.) the underlying graph (\eg, more energy in low frequencies than high frequencies).
Bandlimited assumption in the first category means that a target signal lies strictly inside a linear subspace spanned by the first eigenvectors (Fourier modes) of a graph variation operator, such as the graph Laplacian matrix $\L$ or the adjacency matrix $\W$ \cite{chen2015discrete,tsitsvero2016signals,anis2014towards,anis2016efficient,wang2022mse}.
Assuming that the observed signal samples contain noise, \cite{chen2015discrete} proposed a greedy algorithm to select sample nodes under the E-optimality criterion \cite{pukelsheim2006optimal}, and \cite{tsitsvero2016signals} designed a greedy algorithm to minimize the reconstruction MSE.
To lower complexity, \cite{anis2016efficient} and \cite {sakiyama2019eigendecomposition} used graph spectral proxies and localization operators, respectively, to mitigate the computation burden of eigen-decomposition.

However, the strict bandlimited assumption of target signals is a strong one that many practical graph signals do not satisfy.
To relax this assumption, works in the second category assume that a target signal is generally smooth over a given graph \cite{chepuri2018graph,tanaka2020generalized}.
For example, \textit{graph Laplacian regularizer} (GLR) \cite{pang17}, \ie, $\x^\top \L \x$, is often used to quantify smoothness of signal $\x$ over a graph specified by Laplacian $\L$ \cite{chepuri2018graph,tanaka2020generalized,pang17,bai2020fast,dinesh2022point}.
GLR is often used to regularize under-determined signal reconstruction problems, such as denoising, dequantization, and interpolation  \cite{pang17,liu17,chen2021fast}.
Using GLR as signal prior, graph sampling based on \textit{Gershgorin disk alignment} (GDAS) was proposed to efficiently select sample nodes on undirected (signed) graphs under the E-optimality criterion \cite{bai2020fast,dinesh2022point}.
A key feature of GDAS is that it circumvents eigen-decomposition entirely and executes in linear time, and thus is scalable to large graphs. 

Although the above sampling algorithms are efficient and effective, they are all designed for undirected graphs, and cannot be easily applied to directed graphs.
One main challenge in directed graph sampling is the inherent difficulty in defining graph frequencies, due to the asymmetric nature of the directed graphs' variation operators, \eg, adjacency and Laplacian matrices, $\W$ and $\L$. 
Asymmetry means that the graph operator matrix may not be diagonalizable (and thus eigenvectors cannot be easily obtained), and even if it is, its eigenvalues can be complex, which are difficult to interpret (\eg, ordering of eigenvectors into frequencies from high to low is not obvious). 
Though \cite{chen2015discrete,anis2016efficient} discussed in passing how their methods can be adapted to directed graphs, frequency and bandlimitedness notions are still not well understood on directed graphs.

To circumvent the above challenge, in this work, we formulate a novel directed graph sampling problem using \textit{graph shift variation} (GSV) \cite{chen15} with a solution that completely avoids matrix asymmetry, and in so doing enable a variant of GDAS for fast sampling.
Specifically, we first define GSV $\|\L_{rw} \x\|^2_2$ as a smoothness prior for directed graph signal $\x \in \mathbb{R}^N$, where $\L_{rw}$ is a random-walk graph Laplacian for directed graph $\cG^d$. 
Using GSV as regularizer to reconstruct signal $\x$ from samples $\y = \H \x \in \mathbb{R}^K$, where $\H \in \{0,1\}^{K \times N}$ is a sampling matrix, the solution is $\x^* = \C^{-1} \H^\top \y$, with \textit{symmetric} coefficient matrix $\C = \H^\top \H + \mu \L_{rw}^\top \L_{rw}$. 
To minimize the worst-case reconstruction error (E-optimality), the sampling objective is to choose $\H$ to maximize the smallest eigenvalue $\lambda_{\min}(\C)$ of $\C$. 
To mitigate eigen-decomposition entirely, we devise a variant of previous GDAS to efficiently choose $\H$ to maximize a lower bound $\lambda^-_{\min}(\S\C\S^{-1})$---smallest Gershgorin disc left-end of a similarity transform of $\C$---based on \textit{Gershgorin circle theorem} (GCT) \cite{horn2012matrix}.
Experimental results show that our sampling method yields smaller signal reconstruction errors at a faster speed compared to competing schemes.
\textit{To the best of our knowledge, this is the first directed graph sampling algorithm in GSP free from explicit definitions of directed graph frequencies.}

\section{Preliminaries}
\label{sec:prelim}
Consider a directed graph $\cG^d=(\cV, \cE, \W)$ with $N$ nodes $\cV$ and directed edges $\cE$. 
$\W$ is an adjacency matrix, where $W_{i,j} \in \mathbb{R}^+$ is the positive weight of directed edge $(i, j)$ if it exists in $\cE$.
We assume no self-loops, and thus $W_{i, i}= 0, \forall i$.
Denote by $\D$ the diagonal \textit{out-degree} matrix such that $D_{i,i} = \sum_{j}W_{i,j}$.
We assume that each node has strictly positive degree, \ie, $D_{i,i} > 0, \forall i$; this means that there are no sink nodes.
Graph Laplacian matrix of the directed graph is defined as $\L \triangleq \D - \W$.
The normalized adjacency matrix is $\bar{\W} = \D^{-1}\W$, and the random-walk graph Laplacian is $\L_{rw} \triangleq \D^{-1} \L = \I - \bar{\W}$.
Finally, we assume that there exists at least one node $v$ such that there are directed paths from all other nodes $v^\prime \in \cV$ to node $v$.
This assumption ensures that the rank of the random-walk Laplacian matrix $\L_{rw}$ is $N-1$ \cite{veerman2019diffusion}.

\section{Problem Formulation}
\label{sec:problem}
We first review a previously proposed smoothness prior----\textit{graph shift variation} (GSV) \cite{chen15}----and use it to reconstruct a directed graph signal given limited samples. 
We then formulate a directed graph sampling problem given a defined signal reconstruction scheme.

\subsection{Signal Reconstruction on a Directed Graph}
\label{sec:sig-recon}

\textbf{Graph Shift Variation Prior}. 
Denote by $\x\in \mathbb{R}^{N}$ a signal on a directed graph $\cG^d$. 
An important assumption in GSP is that the signal is smooth w.r.t. an underlying graph. 
For undirected graphs, there exist different smoothness measures of a graph signal, such as GLR \cite{pang17} and \textit{graph total variation} (GTV) \cite{bai19}.
However, for directed graphs, because Laplacian $\L$ is asymmetric, smoothness priors like GLR cannot be used directly.
In this paper, we adopt GSV in \cite{chen15} as the smoothness measure of a signal $\x$ on directed graph $\cG^d$, \ie, 
\begin{align}
S(\x) = \left\Vert \x - \frac{1}{\vert \lambda^a_{\max}(\W_s) \vert}\W_s \x \right\Vert_2^2 .
\end{align} 
Here, $\W_s$ is a \textit{graph shifting operator} \cite{sandryhaila2013discrete,sandryhaila2014discrete}, and it has the same support as adjacency matrix $\W$. $\lambda^a_{\max}(\W_s)$ is the largest magnitude eigenvalue of $\W_s$, and $|\lambda^a_{\max}(\W_s)|$ is the \textit{spectral radius} of matrix $\W_s$. $1/|\lambda^a_{\max}(\W_s)|$ is used for normalization.
$\frac{1}{\vert \lambda^a_{\max}(\W_s) \vert}\W_s \x$ shifts each node's sample to its one-hop neighbors, and $S(\x)$ measures the difference between signal $\x$ and its shifted version.

In this work, we use the normalized adjacency matrix $\bar{\W} = \D^{-1}\W$ as the graph shift operator, since its largest eigenvalue is $\lambda^a_{\max}(\bar{\W}) = 1$. Consequently, the GSV prior is defined as
\begin{equation}
\label{eqn:gsv-digraph}
S(\x) = \left\Vert \x - \bar{\W} \x \right\Vert_2^2 = \left\Vert \L_{rw}\x \right\Vert_2^2 = \x^\top \L_{rw}^\top \L_{rw} \x.
\end{equation}
This GSV prior in \eqref{eqn:gsv-digraph} is similar to the \textit{left eigenvector random walk graph Laplacian} (LeRAG) regularizer in \cite{liu17}. 
It is shown that the smooth prior in (\ref{eqn:gsv-digraph}) is insensitive to vertex degrees.
Further, it is shown \cite{liu17} that GSV of a constant signal $\x = c \1$ evaluates to $S(\x) = 0$, which is intuitive and important for imaging applications.


\vspace{0.05in}
\noindent
\textbf{Signal Reconstruction using GSV Prior}. 
Suppose that we obtain $K$ samples, $\y \in \mathbb{R}^{K}$, of graph signal $\x \in \mathbb{R}^{N}$, where $K < N$. 
We aim to reconstruct signal $\x^*$ given observation $\y$.
To regularize this under-determined problem, we employ GSV \eqref{eqn:gsv-digraph} as prior and solve the following regularized optimization problem \cite{chen15,sandryhaila2013classification}:
\begin{equation}
\label{eqn:recon-opt}
\x^* = \arg\min_{\x} \Vert \H\x - \y \Vert_2^2 + \mu \, \x^\top \L_{rw}^\top \L_{rw} \x,
\end{equation}
where $\H \in \{0, 1\}^{K \times N}$ is the sampling matrix, and $\mu > 0$ is a weight parameter that trades off the fidelity term with the GSV prior. 
The optimal solution $\x^*$ to \eqref{eqn:recon-opt}, which is quadratic and convex, can be obtained by solving the following linear system
\begin{equation}
    \label{eqn:recon-lin-sys}
    \left( \H^\top \H + \mu \L_{rw}^\top \L_{rw}  \right) \x^* = \H^\top \y.
\end{equation}
Note that both $\H^\top \H$ and $\L_{rw}^\top \L_{rw}$ are positive semi-definite matrix.
For matrix $\L_{rw}$, we have $\L_{rw}\1 = (\I - \D^{-1}\W) \1 = \1 - \1 = \0$. Thus, $Span\{\1\} \subseteq Null(\L_{rw})$.
With the assumption in Sec.\;\ref{sec:prelim} that there exists at least one node that can reach any other nodes through directed paths, from Theorem 4.5 in \cite{veerman2019diffusion}, the dimension of $Null(\L_{rw})$ is 1.
Thus, we have $Null(\L_{rw}) = Span\{ \1\}$.
Note that for $\x \notin Span\{ \1\}$, we have $\x^\top \H^\top \H \x \ge 0$ and $\x^\top \L_{rw}^\top \L_{rw} \x > 0$.
For $\x \in Span\{ \1 \}$, say $\x = c\1$ where $c$ is a non-zero real scalar, we have $\x^\top \H^\top \H \x > 0$ and $\x^\top \L_{rw}^\top \L_{rw} \x \ge 0$.
In summary, for any $\x \in \mathbb{R}^{N}$, we have $\x^\top (\H^\top \H + \L_{rw}^\top \L_{rw}) \x > 0$, and thus, $(\H^\top \H + \L_{rw}^\top \L_{rw})$ is a positive definite matrix and invertible.
Consequently, the unique optimal solution to (\ref{eqn:recon-opt}) as well as (\ref{eqn:recon-lin-sys}) is
\begin{equation}
    \label{eqn:recon}
    \x^* = \left( \H^\top \H + \mu \L_{rw}^\top \L_{rw}  \right)^{-1} \H^\top \y.
\end{equation}
Note that given the coefficient matrix in (\ref{eqn:recon-lin-sys}) is symmetric, sparse, and positive definite, $\x^*$ can be solved using \textit{conjugate gradient} (CG) \cite{hestenes1952methods} without performing any matrix inverse.

\subsection{Directed Graph Sampling Problem}
\label{sec:sample}

Observation $\y$ may contain noise. 
Given a sampling budget $K$, to minimize worst-case reconstruction error using reconstruction \eqref{eqn:recon}, we adopt the E-optimality criterion \cite{chen2015discrete,bai2020fast} to maximize the smallest eigenvalue of coefficient matrix $\H^\top \H + \mu \L_{rw}^\top \L_{rw}$.
For notation simplicity, we define diagonal matrix $\A \triangleq \H^\top \H$, whose diagonal entry $A_{i, i} = 1$ if node $i$ is sampled and $A_{i, i} = 0$ otherwise.
The sampling problem is thus formulated as
\begin{align}
\max_{\A} & ~~ \lambda_{\min} (\A + \mu \L_{rw}^\top \L_{rw}) 
\nonumber \\
\mbox{s.t.} & ~~ A_{i,i} \in \{1, 0\},  \forall i, ~~~~~~~ \text{tr}(\A) = K.
\label{eqn:ori-prob}
\end{align}
The second constraint in (\ref{eqn:ori-prob}) indicates that we can sample $K$ nodes. 
Note that optimization \eqref{eqn:ori-prob} is combinatorial in nature and NP-hard in general. 
Next, we develop an efficient algorithm for \eqref{eqn:ori-prob}.

\section{The Proposed GDA-Direct Method}
\label{sec:framework}
We first review the Gershgorin disc alignment sampling (GDAS) algorithm in \cite{bai2020fast}. 
We then derive a lower bound of the objective in \eqref{eqn:ori-prob} and design an efficient algorithm to solve \eqref{eqn:ori-prob} based on GDAS. 

\subsection{Gershgorin Disc Alignment Algorithm}
\label{sec:gda-alg}

The foundation of GDAS is Gershgorin Circle Theorem (GCT) \cite{horn2012matrix}. 
\textit{Gershgorin disc} $\Psi_i$ of the $i$-th row of a real matrix $\M$ is a circle on the complex plane, with center $(M_{i,i}, 0)$ and radius $r_i = \sum_{j\neq i} \vert M_{i,j} \vert$.
GCT states that all eigenvalues of $\M$ reside inside the union of Gershgorin discs of $\M$. 
For a real symmetric matrix $\M$ whose eigenvalues are also real, we can define the lower bound of $\lambda_{\min}(\M)$ as the smallest left-end of all discs:
\begin{equation}    
\lambda^-_{\min}(\M) \triangleq \min_i \{M_{i,i} - r_i\} \le \lambda_{\min}(\M).
\label{eqn:lbd-eig-min}
\end{equation}
When applying a similarity transform on $\M$ with an invertible diagonal matrix $\S = \text{diag}(s_1, \ldots, s_N), s_i \neq 0, \forall i$, we obtain a new matrix $\S \M \S^{-1}$ with the same eigenvalues as original $\M$. 
Thus, 
\begin{equation}
\label{eqn:gda-lb}
\lambda^-_{\min}(\S\M\S^{-1}) \le \lambda_{\min}(\S\M\S^{-1}) = \lambda_{\min}(\M).
\end{equation}

To apply GDAS \cite{bai2020fast} to approximately solve the sampling problem \eqref{eqn:ori-prob}, we employ the lower bound \eqref{eqn:gda-lb} and relax the objective in \eqref{eqn:ori-prob} to $\lambda^-_{\min}(\S (\A + \mu \L_{rw}^\top \L_{rw}) \S^{-1}) = \lambda^-_{\min}(\A + \mu \S \L_{rw}^\top \L_{rw} \S^{-1})$. 
Thus, the optimization problem is reformulated as
\begin{align}
\max_{\A,\S} &~~ \lambda^-_{\min}(\A + \mu \S \L_{rw}^\top \L_{rw} \S^{-1})
\nonumber \\
\mbox{s.t.} &~~ A_{i,i} \in \{1, 0\}, \forall i, ~~~~~~ \text{tr}(\A) = K.
\label{eqn:gda-prob}
\end{align}
GDAS efficiently solves the relaxed problem \eqref{eqn:gda-prob} through \textit{disc shifting} for optimal $\A^*$ and \textit{disc scaling} for optimal $\S^*$. We refer readers to \cite{bai2020fast} for details of the algorithm.

\vspace{0.05in}
\noindent
\textbf{Graph Balancing Algorithm for \eqref{eqn:gda-prob}}. 
GDAS requires that all $\L_{rw}^\top \L_{rw}$'s disc left-ends to be initially aligned at the same value before sampling.
However, this is not always satisfied for random-walk Laplacian $\L_{rw}$ of a directed graph.
Define $\tilde{\L} \triangleq \L_{rw}^\top \L_{rw}$.
We see that $\tilde{\L} \1 = \L_{rw}^\top \L_{rw}\1 = \0$, \ie, for any row $i$ of $\tilde{\L}$, $\tilde{L}_{i, i} = -\sum_{j\neq i} \tilde{L}_{i, j}$.
This means that $\tilde{\L}$ is a Laplacian matrix for an undirected graph without self-loops, and its corresponding adjacency matrix is $\tilde{\W} = \text{Diag}(\tilde{\L}) - \tilde{\L}$.
Hence, if there exist negative weights in $\tilde{\W}$, then  $\tilde{\L}$'s disc left-ends are not all aligned at $0$, since $c_i - r_i = \tilde{L}_{i,i} - \sum_{j\neq i} |\tilde{L}_{i,j}| = \sum_{j\neq i} (-\tilde{L}_{i,j} - |\tilde{L}_{i,j}|) \neq 0$.  

To align disc left-ends, one method is to first \textit{balance}\footnote{A graph is balanced if there are no cycles of odd number of negative edges \cite{dinesh2022point}.} the graph using an algorithm in \cite{dinesh2022point}, then align the discs' left-ends of the Laplacian $\tilde{\L}^b$ of the balanced graph $\cG^b$ via a similarity transform $\tilde{\L}^p = \tilde{\S} \tilde{\L}^b \tilde{\S}^{-1}$, where $\tilde{\S} = \text{diag}(\tilde{s}_1, \ldots, \tilde{s}_n)$ and $\{\tilde{s}_i\}$ are computed from the fist eigenvector of $\tilde{\L}^b$  \cite{yang22}.
Subsequently, GDAS \cite{bai2020fast} can be employed for sampling by maximizing $\lambda^-_{\min}(\A + \S \tilde{\L}^p \S^{-1})$.
We refer to this method as \texttt{GDA-Balance}.

\subsection{Analysis of the Lower Bound of the Objective Function}


Although \texttt{GDA-Balance} can solve the relaxed sampling problem \eqref{eqn:gda-prob}, it has drawbacks.
First, the graph balancing procedure in \cite{dinesh2022point} can be computation-expensive.
Second, the resulting balanced graph Laplacian $\tilde{\L}^b$ may not be PSD, while $\L_{rw}^\top \L_{rw}$ is PSD with $\lambda_{\min}(\L_{rw}^\top \L_{rw}) = 0$, as discussed in Sec.\;\ref{sec:sig-recon}.
This means that computed lower bound using \texttt{GDA-Balance} for objective $\lambda_{\min}(\A + \mu \L_{rw}^\top \L_{rw})$ may be loose in practice. 

From the definition of $\L_{rw}$ of the directed graph in Sec.\;\ref{sec:prelim}, we see that the left-ends of $\L_{rw}$'s discs are aligned at $(0, 0)$ on the complex plane, though it is not symmetric.
This observation motivates us to relax the objective $\lambda_{\min}(\A + \mu \L_{rw}^\top \L_{rw})$ in (\ref{eqn:ori-prob}) with a lower bound in the form of $\lambda^-_{\min}(\S(\delta \A + \rho \L_{rw})\S^{-1})$, where $\delta$ and $\rho$ are positive constants.
Doing so means we can directly apply GDAS on the relaxed lower bound due to the alignment of the left ends of $\L_{rw}$.
In the following, we first derive this lower bound, and then efficiently calculate the parameters $\delta$ and $\rho$.

\begin{proposition}
\label{prop:main-lb}
Given a normalized adjacency matrix $\bar{\W}$, its random-walk graph Laplacian $\L_{rw}$, sampling budget $K$, and a non-negative hyper-parameter $\mu > 0$, for any positive scalar $0 < \epsilon < 1$, we define two sets of parameters $\delta$ and $\rho$ depending on $\mu$ as follows:
\begin{align}
    \bullet\quad &\delta = \sqrt{1 - \epsilon \mu},~ \rho = \frac{-\delta c + \sqrt{\delta^2 c^2 + 4\mu}}{2}, \text{ where,}\label{eqn:param-delta-rho}\\
    &c = \frac{3 + \max_{i} \sum_{j=1}^{K-1} \bar{W}_{[j], i}}{
    \min\limits_{\substack{\A: \text{tr}(\A) = K \\ A_{i,i} = \{0, 1\}}} \lambda_{\min}(\L_{rw}^\top \L_{rw} + \epsilon \cdot \A)} \text{,  for  } \mu \le 1;\text{  and}\label{eqn:param-c}\\
    \bullet\quad &\delta = \frac{-\rho c + \sqrt{\rho^2 c^2 + 4}}{2},~ \rho = \sqrt{\mu - \epsilon}, \text{ where,}\label{eqn:param-delta-rho-mu-ge1}\\
    &c = \frac{3 + \max_{i} \sum_{j=1}^{K-1} \bar{W}_{[j], i}}{\min\limits_{\substack{\A: \text{tr}(\A) = K \\ A_{i,i} = \{0, 1\}}} \lambda_{\min}(\A + \epsilon\cdot\L_{rw}^\top \L_{rw})} \text{,  for  } \mu > 1.\label{eqn:param-c-mu-ge1}
\end{align}
$\bar{W}_{[j], i}$ is the $j$-th largest element in the $i$-th column of $\bar{\W}$.
With the defined parameters $\delta$ and $\rho$ above, if a non-negative invertible $\S$ satisfies $\lambda^-_{\min}(\S(\delta\A + \rho \L_{rw})\S^{-1}) \ge 0$, then the following inequality holds
\begin{equation}
    \frac{(\lambda^-_{\min}(\S(\delta\A + \rho \L_{rw})\S^{-1}))^2}{\gamma^2_{\max}} \le \lambda_{\min} (\A + \mu \L_{rw}^\top \L_{rw}),\label{eqn:1}
\end{equation}
where the positive constant $\gamma_{\max}$ is defined as
$$
\begin{gathered}
    \gamma_{\max} \triangleq \max_{\A} \gamma(\P)\\
     s.t.~~~ \delta\A + \rho\L_{rw} = \P \bLambda \P^{-1},~~~ A_{i,i}=\{0,1\},~~~ tr(\A) = K.
\end{gathered}
$$
In the above, $\gamma(\P)$ is the condition number of matrix $\P$.

\end{proposition}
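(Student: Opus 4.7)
The plan is to chain four quantities: $\lambda^-_{\min}(\S\M\S^{-1})$ with $\M \triangleq \delta\A+\rho\L_{rw}$, an eigenvalue modulus $|\lambda_{\min}(\M)|$, the singular value $\sigma_{\min}(\M)$, and finally the target $\lambda_{\min}(\A+\mu\L_{rw}^\top\L_{rw})$.

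For the first two links I would apply GCT to $\S\M\S^{-1}$. Since $\S$ is a real invertible diagonal matrix, $\S\M\S^{-1}$ shares its spectrum with $\M$ and has real diagonal entries, so the Gershgorin discs are centered on the real axis and $\mathrm{Re}(\lambda_i(\M)) \ge \lambda^-_{\min}(\S\M\S^{-1})$ for every eigenvalue $\lambda_i(\M)$. Combined with the hypothesis $\lambda^-_{\min}(\S\M\S^{-1}) \ge 0$ this gives $|\lambda_i(\M)| \ge \mathrm{Re}(\lambda_i(\M)) \ge \lambda^-_{\min}(\S\M\S^{-1})$, so in particular $|\lambda_{\min}(\M)|^2 \ge (\lambda^-_{\min}(\S\M\S^{-1}))^2$. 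Writing the eigendecomposition $\M = \P\bLambda\P^{-1}$ and substituting $\y = \P^{-1}\x$ in $\|\M\x\|^2$ produces the standard bound $\sigma_{\min}(\M)^2 \ge |\lambda_{\min}(\M)|^2/\gamma(\P)^2$, and by definition of $\gamma_{\max}$ this chains to $\sigma_{\min}(\M)^2 \ge (\lambda^-_{\min}(\S\M\S^{-1}))^2/\gamma_{\max}^2$.

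To close the chain I would establish the matrix inequality $\M^\top\M \preceq \A+\mu\L_{rw}^\top\L_{rw}$, from which $\sigma_{\min}(\M)^2 = \lambda_{\min}(\M^\top\M) \le \lambda_{\min}(\A+\mu\L_{rw}^\top\L_{rw})$ follows by monotonicity of $\lambda_{\min}$ on symmetric matrices. Expanding $\M^\top\M$ and using $\A^2 = \A$, this reduces to
\begin{equation*}
(1-\delta^2)\A + (\mu-\rho^2)\L_{rw}^\top\L_{rw} \succeq \delta\rho\,(\A\L_{rw}+\L_{rw}^\top\A).
\end{equation*}
The parameter definitions (\ref{eqn:param-delta-rho})--(\ref{eqn:param-c}) are arranged precisely so that $1-\delta^2 = \epsilon\mu$ and $\mu-\rho^2 = \delta\rho c$ for $\mu \le 1$, while (\ref{eqn:param-delta-rho-mu-ge1})--(\ref{eqn:param-c-mu-ge1}) give the symmetric identities $1-\delta^2 = \delta\rho c$ and $\mu-\rho^2 = \epsilon$ for $\mu > 1$; in either case the goal reduces to $\epsilon\mu\A + \delta\rho c\,\L_{rw}^\top\L_{rw} \succeq \delta\rho\,(\A\L_{rw}+\L_{rw}^\top\A)$ (up to swapping the two identities).

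The main obstacle is verifying this PSD-type inequality with the explicit $c$ in the proposition. The plan is to rewrite $\x^\top(\A\L_{rw}+\L_{rw}^\top\A)\x = 2\|\A\x\|^2 - 2\sum_{i\in S, j}\bar{W}_{ij}x_ix_j$ using $\L_{rw} = \I - \bar{\W}$ and $S = \{i: A_{ii}=1\}$, then bound the cross sum by an AM-GM split $2|x_ix_j| \le x_i^2 + x_j^2$ in which the row contribution is controlled by the row-stochasticity $\sum_j \bar{W}_{ij} = 1$ and the column contribution by $\max_i \sum_{j=1}^{K-1}\bar{W}_{[j],i}$ (exploiting $|S|=K$ together with $\bar{W}_{ii}=0$). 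The resulting residual term proportional to $\|\x\|^2$ has coefficient matching the numerator $3+\max_i\sum_{j=1}^{K-1}\bar{W}_{[j],i}$ of $c$; the denominator $\min_\A\lambda_{\min}(\L_{rw}^\top\L_{rw}+\epsilon\A)$ then converts that $\|\x\|^2$ back into energy controlled by the left-hand side, and the carefully engineered form of $c$ is exactly what is needed for the inequality to hold uniformly over all feasible $\A$.
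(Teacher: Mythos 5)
Your proposal is correct and follows essentially the same route as the paper's proof: the chain $\lambda^-_{\min}(\S\M\S^{-1}) \le |\lambda^a_{\min}(\M)| \le \gamma(\P)\,\sigma_{\min}(\M)$ is the paper's Lemma 1 (which cites Ruhe for the step you derive directly via $\y=\P^{-1}\x$), the Loewner comparison $\M^\top\M \preceq \A+\mu\L_{rw}^\top\L_{rw}$ with the identities $1-\delta^2=\epsilon\mu$, $\mu-\rho^2=\delta\rho c$ (and their swap for $\mu>1$) is exactly the paper's quadratic-form argument, and your AM-GM/row-stochasticity bound on the cross term is the same diagonal-dominance computation the paper packages as a Gershgorin right-end bound in its Lemma 3.
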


The proof of Proposition \ref{prop:main-lb} is in Appendix \ref{sec:appendix-proof-prop1}.
With Proposition \ref{prop:main-lb}, we can relax the objective $\A + \mu \L_{rw}^\top \L_{rw}$ with its lower bound in (\ref{eqn:1}).
Compared to \texttt{GDA-Balance}, we see that the derived lower bound in \eqref{eqn:1} is always positive, and thus is tighter than the lower bound given by \texttt{GDA-Balance}.

Note that $\gamma_{\max}$ is a constant that does not depend on $\A$ or $\S$.
Further, since we assumed $\lambda^-_{\min}(\S(\delta\A + \rho \L_{rw})\S^{-1}) \ge 0$ in Proposition \ref{prop:main-lb}, we can maximize $\lambda^-_{\min}(\S(\delta\A + \rho \L_{rw})\S^{-1})$, where we can directly apply GDAS.
Note that during the optimization process in GDAS, it naturally ensures that $\lambda^-_{\min}(\S(\delta\A + \rho \L_{rw})\S^{-1}) \ge 0$ \cite{bai2020fast}.
Thus, this assumption in Proposition \ref{prop:main-lb} always holds when we use GDAS on $\lambda^-_{\min}(\S(\delta\A + \rho \L_{rw})\S^{-1})$.

\begin{algorithm}[t]
    \SetAlgoLined
    \KwIn{Normalized adjacency matrix $\bar{\W}$, Laplacian matrix $\L_{rw}$, priors weight $\mu$, sampling budget $K$, and total number of nodes $N$, parameter $\epsilon$.}
    \KwOut{Optimal $\A^*$}
        Calculate numerator of $c$, that is, $3 + \max_{i} \sum_{j=1}^{k-1} W_{[j], i}$\;

        \eIf{$\mu \le 1$}{

    
            Estimate the denominator $c$ with $\frac{k\epsilon}{N}$\;
    
            Calculate $c$, $\delta$ and $\rho$ with (\ref{eqn:param-c}) and (\ref{eqn:param-delta-rho})\;
        }{
            Estimate the denominator $c$ with (\ref{eqn:approx-denom-c-mu-ge1})\;

            Calculate $c$, $\delta$ and $\rho$ with (\ref{eqn:param-c-mu-ge1}) and (\ref{eqn:param-delta-rho-mu-ge1})\;
        }

        Use GDA algorithm \cite{bai2020fast} on $\lambda^-_{\min}(\S(\delta\A + \rho \L_{rw})\S^{-1})$ for optimal $\A^*$\;
        \caption{The \texttt{GDA-Direct} Sampling Algorithm on Directed Graph}
    \label{alg:di-gda}
  \end{algorithm}

\subsection{GDA-Direct Algorithm for Sampling on Directed Graphs}

To maximize $\lambda^-_{\min}(\S(\delta\A + \rho \L_{rw})\S^{-1})$ using GDAS, we first need to calculate the parameters $c$ in (\ref{eqn:param-c}) and (\ref{eqn:param-c-mu-ge1}), and $\delta$ as well as $\rho$ in (\ref{eqn:param-delta-rho}) and (\ref{eqn:param-delta-rho-mu-ge1}).
We discuss the efficient calculation / approximation of these parameters and elaborate the complete \texttt{GDA-Direct} sampling algorithm.
The algorithm is summarized in Algorithm\;\ref{alg:di-gda}.

To apply GDAS on $\lambda^-_{\min}(\S(\delta\A + \rho \L_{rw})\S^{-1})$, we need to determine the parameters $\delta$ and $\rho$, which depends on the parameter $c$ in (\ref{eqn:param-c}) and (\ref{eqn:param-c-mu-ge1}).
To calculate $c$, we need to calculate its numerator $3 + \max_{i} \sum_{j=1}^{K-1} W_{[j], i}$ and denominator $\min_{\A} \lambda_{\min}(\L_{rw}^\top \L_{rw} + \epsilon \cdot \A)$ or $\min_{\A} \lambda_{\min}(\A + \epsilon\L_{rw}^\top \L_{rw})$, respectively.
To calculate the numerator, for the $i$-th column of matrix $\W$, we find the top $K-1$ largest entries, \ie, $W_{[j], i}$ for $j=1,\ldots, K-1$.
Then, with the computed $\sum_{j=1}^{K-1} W_{[j], i}$ for each column, we calculate the numerator (line 1 in Algorithm \ref{alg:di-gda}).
The time complexity for this step is $\cO(N d^{in}_M \log(K-1))$, where $d^{in}_{M}$ is the largest in-degree of the directed graph.
Here, we can see that the time complexity is proportional to the graph size $N$.

Next, we calculate the denominator $\min_{\A} \lambda_{\min}(\L_{rw}^\top \L_{rw} + \epsilon \cdot \A)$, when $\mu \le 1$.
Here, optimization variable $\A$ is diagonal and satisfies $A_{i, i} = \{ 1, 0 \}$ and $\text{tr}\{ \A \} = K$. 
To simplify calculation, we assume small positive $\epsilon$ near $0$, and $\L_{rw}^\top \L_{rw} + \epsilon \cdot \A$ is a perturbation of $\L_{rw}^\top \L_{rw}$.
Denote by $0 = \lambda_1 < \lambda_2 \le \cdots \le \lambda_N $ the eigenvalues of $\L_{rw}^\top \L_{rw}$, and $\u_i$ the corresponding eigenvector of $\lambda_i$.
According to the analysis of the perturbed matrix's eigenvalues \cite{ceci2018small,murthy1988approximations,wilkinson1988algebraic}, we can approximate the eigenvalue of $\L_{rw}^\top \L_{rw} + \epsilon \cdot \A$ as
\begin{equation}
\tilde{\lambda}_i \approx \lambda_i + \epsilon \u_i^\top \A \u_i.
\end{equation}
Note that the smallest eigenvalue of $\L_{rw}^\top\L_{rw}$ and its corresponding eigenvector are $\lambda_1 = 0$ and $\u_1 = \frac{\1}{\sqrt{N}}$, respectively. 
Thus, approximated eigenvalue $\tilde{\lambda}_1$ is
\begin{equation}
    \tilde{\lambda}_1 \approx 0 + \epsilon \times \frac{1}{N} \1^\top \A \1 = \frac{K\epsilon}{N}.
\end{equation}

For other approximated eigenvalues, we have $\tilde{\lambda}_i \approx \lambda_i + \epsilon \u_i^\top \A \u_i \ge \lambda_i$ for $i > 1$.
To ensure that $\tilde{\lambda}_1$ is the smallest approximated eigenvalue, we only need $\tilde{\lambda}_1 \le \lambda_2$, that is,
\begin{equation}
    \label{eqn:epsilon}
    \epsilon \le \frac{\lambda_2 N}{K}.
\end{equation}
Note that \eqref{eqn:epsilon} also gives us a guide on how to choose $\epsilon$ when $\mu \le 1$.
For the second smallest eigenvalue of $\L_{rw}^\top\L_{rw}$, \ie, $\lambda_2$, we can use LOBPCG \cite{knyazev2001toward} to efficiently calculate it.
When $\epsilon$ satisfies the condition in (\ref{eqn:epsilon}), we can approximate the denominator of $c$ as $\lambda_{\min}(\L_{rw}^\top \L_{rw} + \epsilon \cdot \A) \approx \tilde{\lambda}_1 = \frac{K\epsilon}{N}$, which is independent of variable $\A$.
Further, we can calculate $\delta$ and $\rho$ with the approximated $c$ and (\ref{eqn:param-delta-rho}) (line 3 \& 4 in Algorithm \ref{alg:di-gda}).

Similarly, when $\mu > 1$, we can also approximate the denominator of $c$ in \eqref{eqn:param-c-mu-ge1}, \ie, $\min_{\A} \lambda_{\min}(\A + \epsilon\cdot\L_{rw}^\top \L_{rw})$.
Since matrix $\A$ has $N-K$ zero eigenvalues whose corresponding eigenvectors are canonical basis vectors $\e_i$, the smallest eigenvalue of $\A + \epsilon\cdot\L_{rw}^\top \L_{rw}$ can be approximated as
\begin{equation}
    \tilde{\lambda}_1 \approx 0 + \epsilon \e_i^\top \L_{rw}^\top \L_{rw} \e_i.
\end{equation}
Note that since we assumed there are no sink nodes in the graph, we have $(L_{rw})_{i, i} = 1$ for all $i$, and $\e_i^\top \L_{rw}^\top \L_{rw} \e_i \ge 1$. Therefore, when $\mu > 1$, the denominator of $c$ can be approximated as
\begin{equation}
    \min_{\A} \lambda_{\min}(\A + \epsilon\cdot\L_{rw}^\top \L_{rw})\approx \min \tilde{\lambda}_1 = \epsilon \cdot \min_{i}\e_i^\top \L_{rw}^\top \L_{rw} \e_i.
    \label{eqn:approx-denom-c-mu-ge1}
\end{equation}
The time complexity of calculating $c$, $\delta$, and $\rho$ are $\cO(1)$ for $\mu \le 1$ and $\cO(N)$ for $\mu > 1$, respectively.
After determining $\delta$ and $\rho$, we can employ GDAS to maximize $\lambda^-_{\min}(\S(\delta\A + \rho \L_{rw})\S^{-1})$ to obtain the optimal solution $\A^*$ (line 9 in Algorithm \ref{alg:di-gda}).
The time complexity of this step is $\cO(N)$ \cite{bai2020fast}.
In summary, the overall time complexity is $\cO(N\cdot (1 + d^{in}_{M} \log(K-1)))$.
When the graph is sparse, and the sampling budget $K$ is small, the overall time complexity is roughly $\cO(N)$.

\section{Experiments}
\label{sec:results}
\begin{figure*}[!t]
\centering
\subfigure{\includegraphics[width=0.31\textwidth]{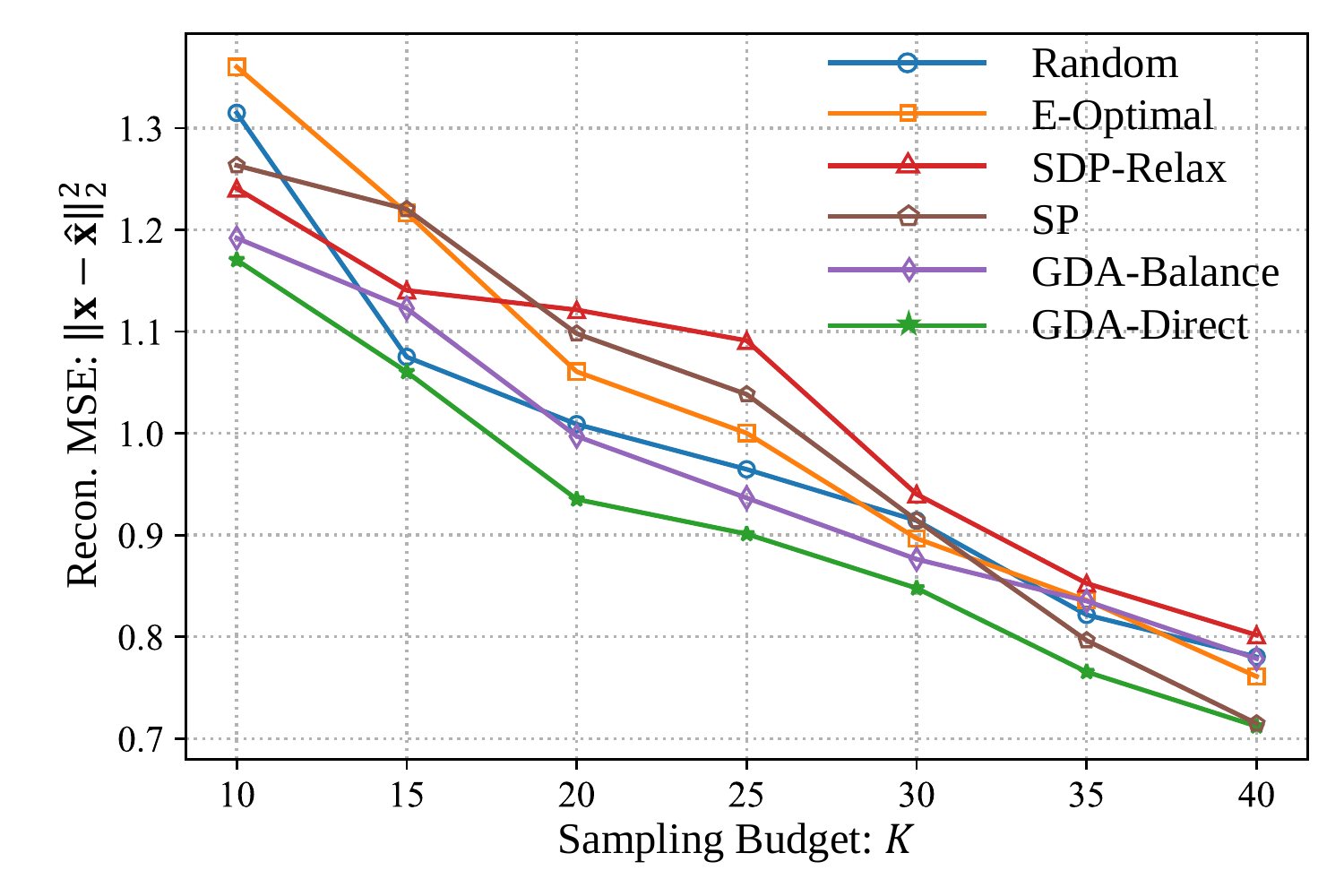}}
\subfigure{\includegraphics[width=0.31\textwidth]{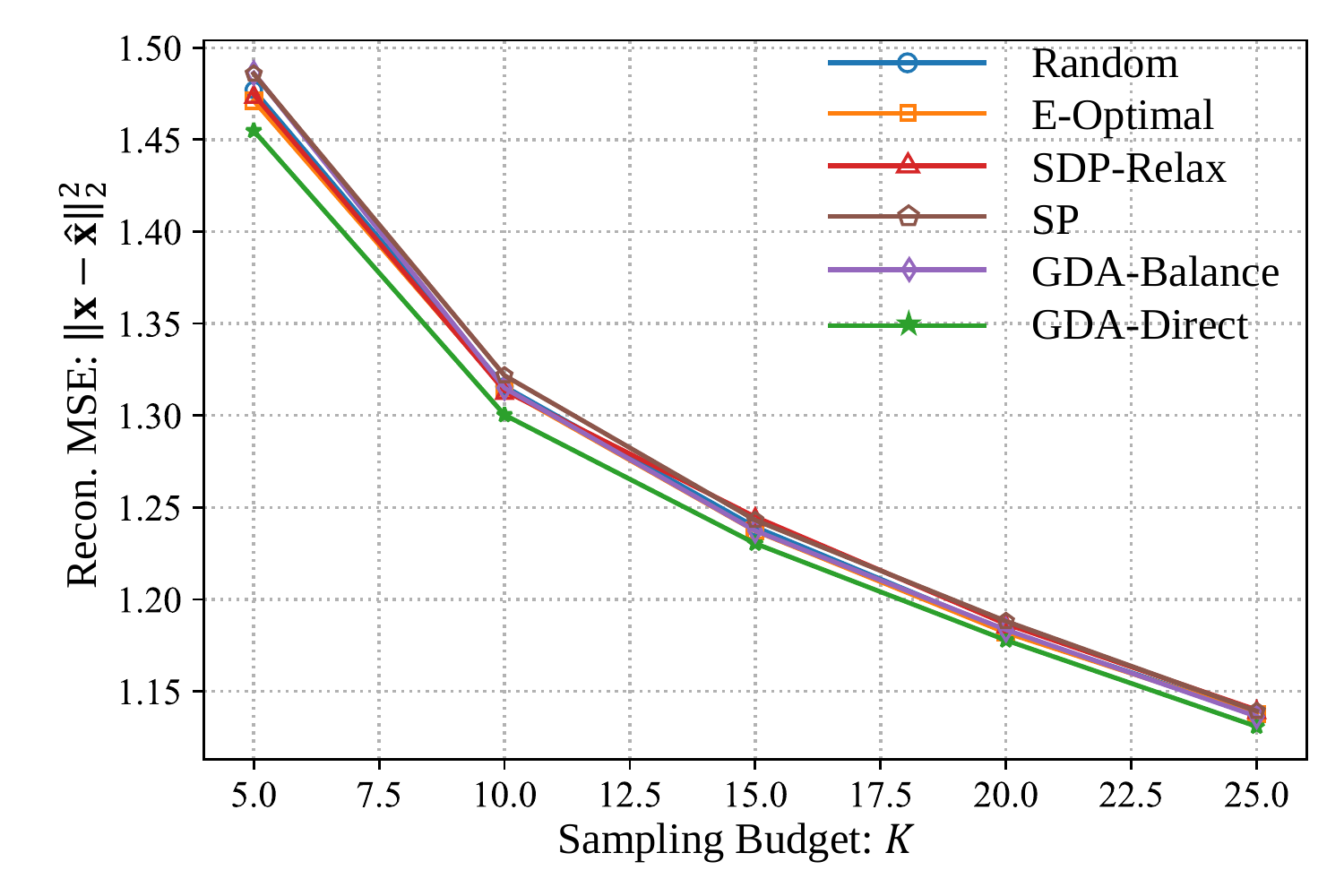}}
\subfigure{\includegraphics[width=0.31\textwidth]{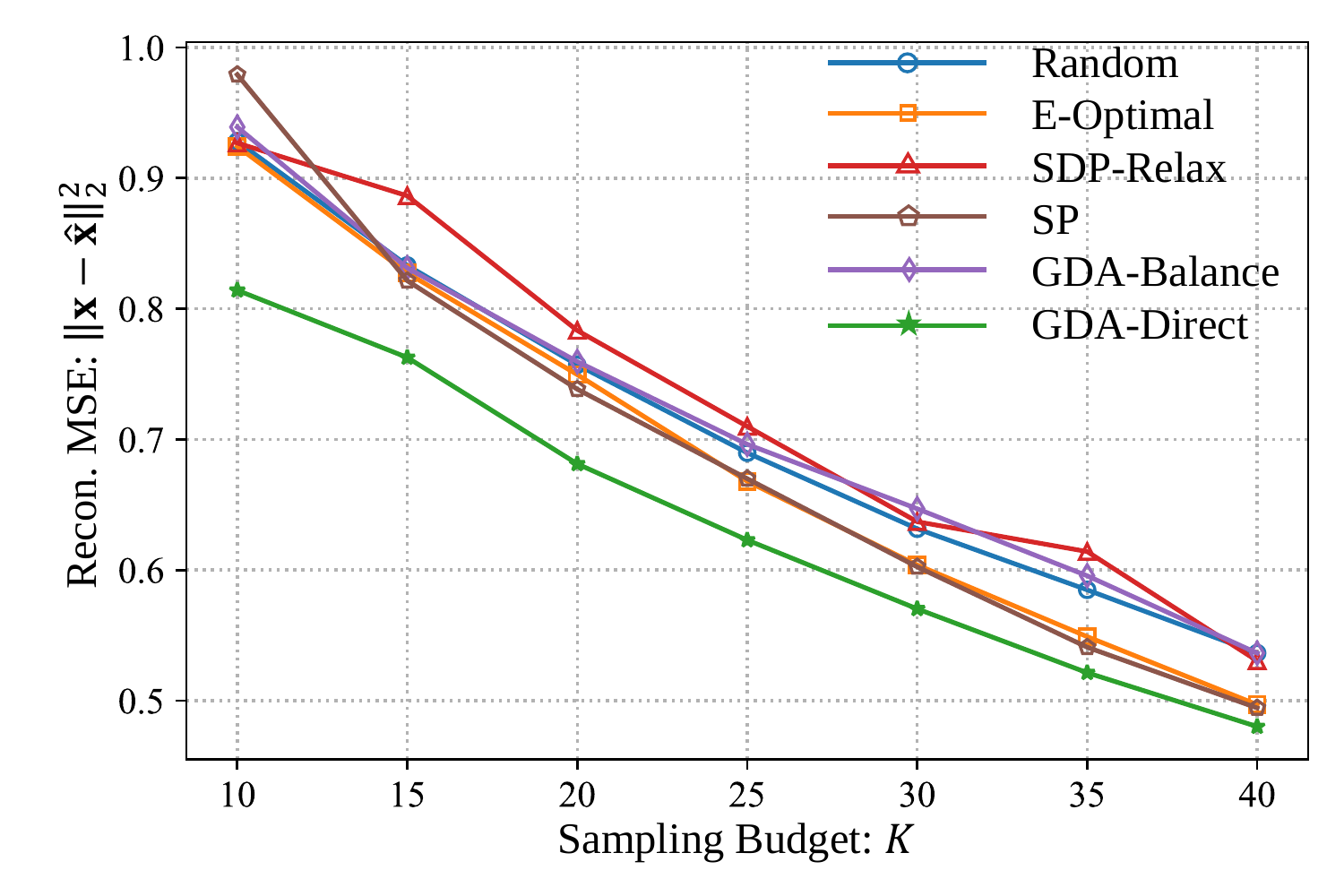}}
\caption{Reconstruction MSE of (left) GS1, (mid) GS2, and (right) GS3 on random graph ($N = 200$ and $p = 0.1$).}
\label{fig:mse-results-N200}
\end{figure*}

\begin{figure*}[!t]
\centering
\subfigure{\includegraphics[width=0.31\textwidth]{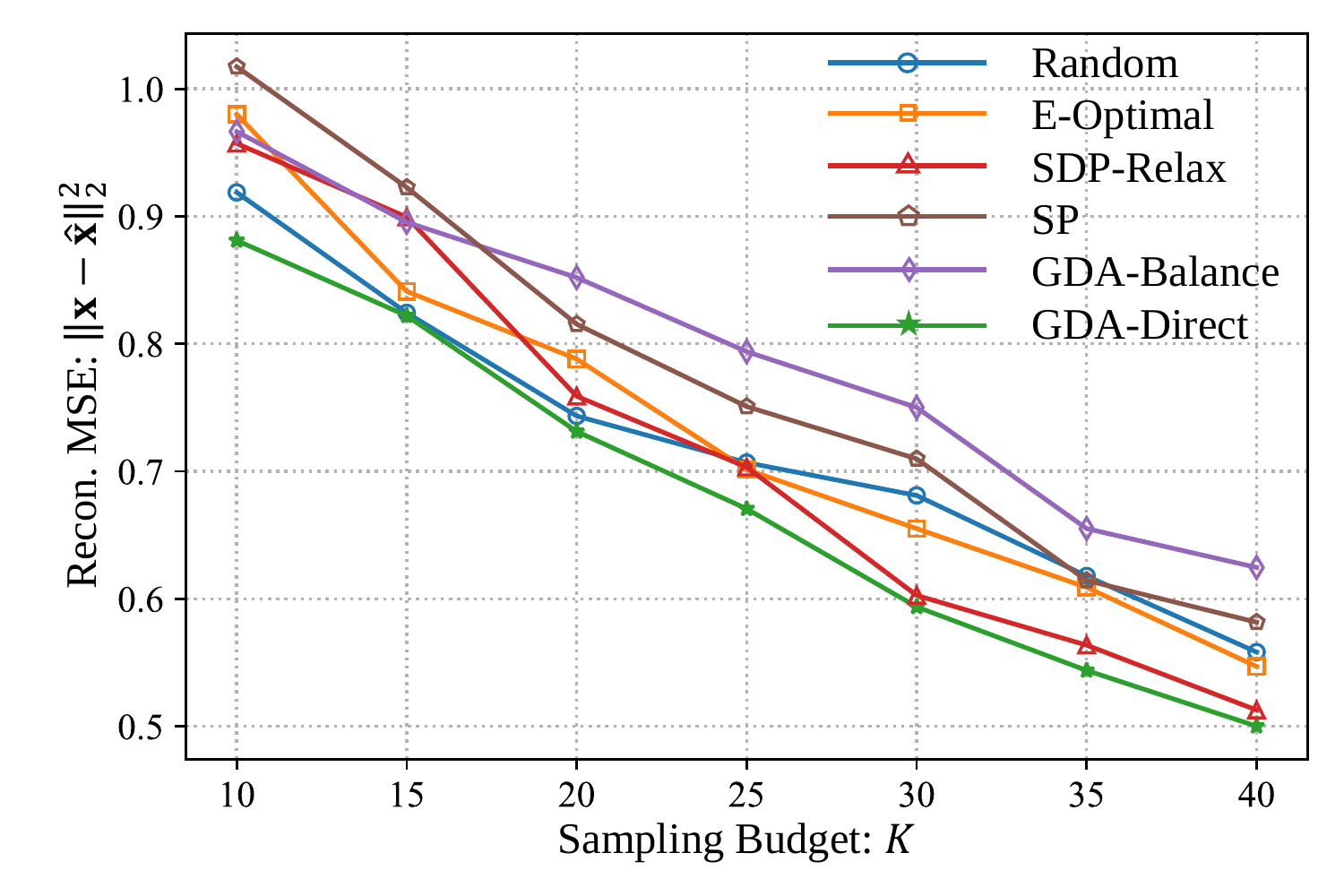}}
\subfigure{\includegraphics[width=0.31\textwidth]{mse-200-random-p010-lowpass01.pdf}}
\subfigure{\includegraphics[width=0.31\textwidth]{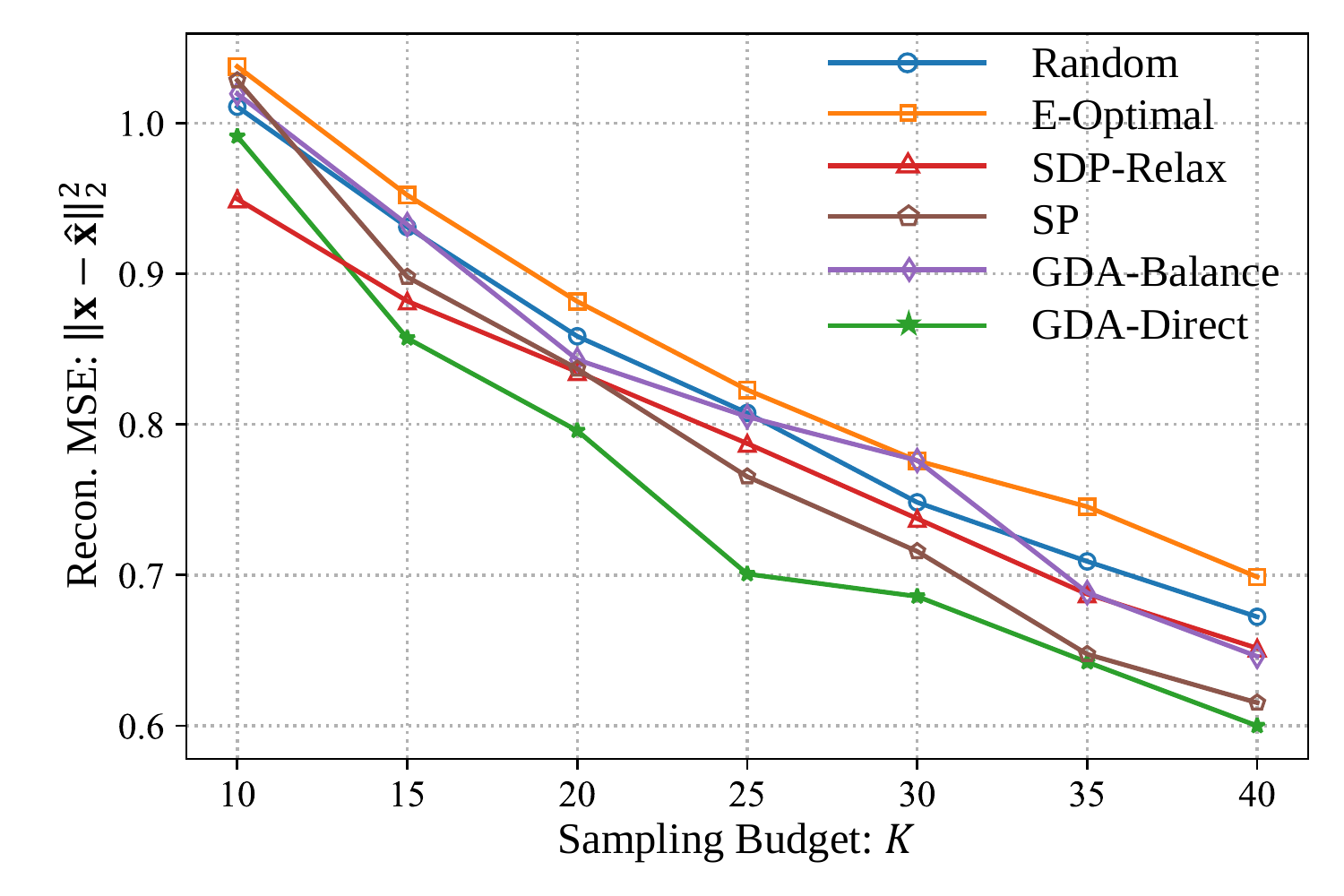}}
\caption{Reconstruction MSE of GS1 on random graph with (left) $p=0.05$, (mid) $p=0.1$, and (right) $p=0.15$.}
\label{fig:mse-results-sparse}
\end{figure*}

\begin{figure}[t]
\centering
\includegraphics[width=0.45\textwidth]{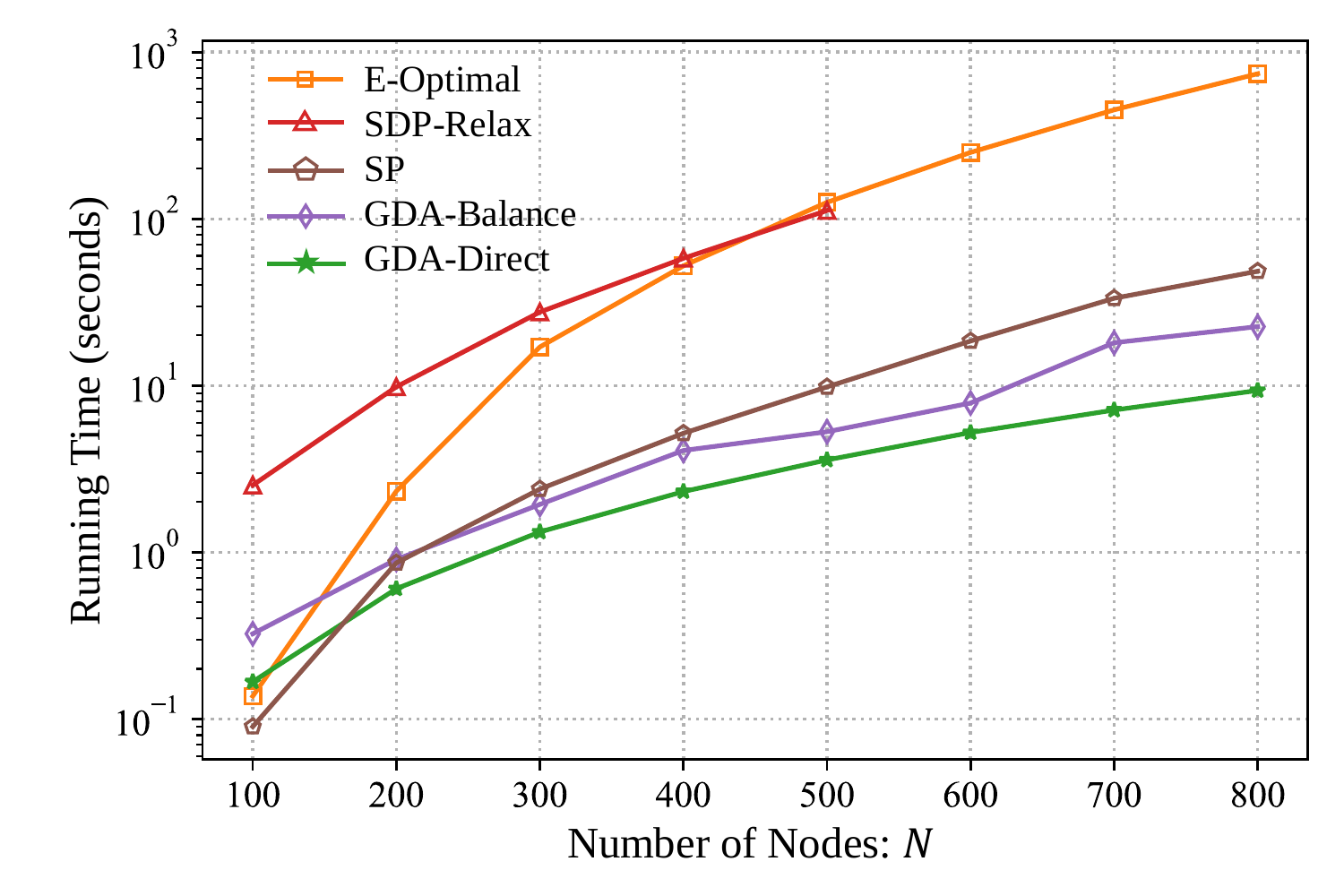}
\caption{Running time on random graphs with different sizes.}
\label{fig:run-time}
\end{figure}

We tested the proposed sampling algorithm on synthetic directed graphs.
Our experimental platform was Ubuntu 18.04 server with a 32-core AMD Ryzen 3970X CPU and 250 GB memory. 
All algorithms were
implemented with Python 3.8.

\vspace{0.05in}
\noindent
\textbf{Graph Structure}. We randomly generated Erd{\"o}s R{\'e}nyi random graphs with $N=200$ nodes for experiments.
For any ordered pair of nodes $u$ and $v$, a directed edge $( u,v )$ was generated with probability $p$.
To satisfy the assumption in Sec.\;\ref{sec:prelim} that there is at least one node that can be reached by any other nodes, we first generated a random graph with $N-1$ nodes.
Then, we manually added the last node $v_N$ and directed edges $(v_i, v_N)$ for $i=1,...,N-1$.
Further, we randomly chose a node $v_i$ ($i\neq N$), and added a directed edge $(v_N, v_i)$, such that there were no sink nodes with zero out-degrees in the graph.
We independently generated the weight of each edge from uniform distribution in $[0,1]$ and then normalized the weights such that $\bar{\W}\1 = \1$.

\vspace{0.05in}
\noindent
\textbf{Graph Signal}. 
We considered three types of graph signals.
\begin{enumerate}
\item (GS1) We used the eigen-decomposition $\L_{rw}^\top \L_{rw} = \U \tilde{\bLambda} \U^\top$ where the eigenvectors and eigenvalues are $\u_i$'s and $\tilde{\lambda}_i$'s, and generated a random bandlimited graph signal $\x = \sum_{i=1}^{m} c_i \u_i$, where $\tilde{\lambda}_m$ was the cutoff frequency, and each $c_i$ was independently generated from normal distribution $\cN(0, 1)$. 
Here, $m=\lceil 0.1 N \rceil$.
\item (GS2) We generated graph signals following a normal distribution $\cN(\0, (\L_{rw}^\top \L_{rw} + \omega \I)^{-1})$ where $\omega = 0.1$.
\item (GS3) We generated graph signals $\x$ through a diffusion process.
Specifically, we randomly generated an initial signal $\x(0)$ from the normal distribution $\cN(0, \I)$.
Then, we followed a diffusion process $\x(t) = (1-\alpha) \x(t-1) + \alpha\bar{\W} \x(t-1)$ for $T$ steps, where $\alpha$ is a parameter, and the graph signal is defined as $\x = \x(T)$.
The intuition here is that, as $T\rightarrow\infty$, all entries of $\x(T)$ converge to the same value \cite{degroot1974reaching}. 
Thus, the smoothness prior $S(\x(T))$ in \eqref{eqn:gsv-digraph} approaches $0$. 
\end{enumerate}
For each generated signal, we normalized via $\x \leftarrow \frac{\x - \text{mean}(\x)}{\sqrt{N}\cdot\text{std}(\x)}$, such that $\Vert \x \Vert_2^2 = 1$.

\vspace{0.05in}
\noindent
\textbf{Baseline Methods}. 
We compared the proposed sampling method \texttt{GDA-Direct} with the following baseline methods.
\begin{itemize}
\item \texttt{Random}: This method randomly selects sampling nodes.
\item \texttt{E-optimal} \cite{chen2015discrete}: Using the E-optimality criterion, this method greedily selects sampling nodes one-by-one.
\item \texttt{SDP-Relax} \cite{vandenberghe1996semidefinite}: This method relaxes the original integer constraint $A_{i,i} \in \{0,1\}$ in \eqref{eqn:ori-prob} to a continuous constraint $0 \le A_{i,i} \le 1$. Consequently, the relaxed problem can be formulated as a semi-definite programming (SDP) problem \cite{vandenberghe1996semidefinite}
\begin{align}
\min_{A_{i,i}} & ~ -l 
\label{eqn:sdp} \\
\mbox{s.t.} & ~\sum_{i} A_{i, i} \E_{i} + \L_{rw}^\top \L_{rw} - l \cdot \I \succeq \0,
\nonumber \\
& ~ 0 \le A_{i,i} \le 1,  \forall i, ~~~~~~~ \sum_i A_{i, i} = K.
\nonumber
\end{align}
Matrix $\E_{i}$ has only non-zero entry $E_{i,i}=1$.
Since the problem in (\ref{eqn:sdp}) is convex, we used \texttt{cvxopt} to obtain a solution. 
Given an optimal solution $A_{i,i}^*$ to \eqref{eqn:sdp}, we selected the $K$ largest $A^*_{i, i}$ and set them to $1$ and other $A^*_{i, i}$ to $0$.
\item \texttt{GDA-Balance} \cite{dinesh2020}: As discussed in Sec.\;\ref{sec:gda-alg}, this method treats $\L_{rw}^\top \L_{rw}$ as a generalized Laplacian for a signed graph, and uses the algorithm in \cite{dinesh2020} for sampling.
\item \texttt{SP} \cite{anis2016efficient}: This method greedily selects the sampling nodes with the graph spectral proxy.
We use $\L_{rw}^T \L_{rw}$ as the variation operator in \cite{anis2016efficient}.
\end{itemize}
In the experiments, we randomly generated 5 random graphs 
and 3000 graph signals over each graph.
$T$ and $\alpha$ for GS3 were set to $50$ and $0.1$ unless otherwise specified.
We adopted the signal reconstruction scheme (\ref{eqn:recon}) in Sec.\;\ref{sec:sig-recon}, and set the hyper-parameter $\mu=0.001$.
Next, we show the average results over 15000 simulation runs.

We first compare the reconstruction MSE $\Vert \x - \hat{\x}\Vert_2^2$ of different sampling methods.
The results with different graph signals are shown in Fig.\;\ref{fig:mse-results-N200}.
For all three types of graph signal and sampling budget $K$, the proposed \texttt{GDA-Direct} performed better than other baseline sampling methods.
In particular, when the sampling budget $K$ is small, the superiority of \texttt{GDA-Direct} over other baseline methods is more obvious.
For example, \texttt{GDA-Direct} decreases the reconstruction MSE by 8.6\% for low-pass signal (GS1) with $K=20$ samples, and that by 11.9\% for diffusion signal (GS3) with only $K=10$ samples, comparing to other baseline methods that do not use GDAS algorithm.

We also compare \texttt{GDA-Direct} with other baseline methods on random graphs with different sparsity.
We adjustes $p$ from $0.05$ to $0.15$, and The resulting reconstruction errors for GS1 are shown in Fig.\;\ref{fig:mse-results-sparse}.
Similar results for other types of graph signals were observed and thus are omitted ere.
In Fig.\;\ref{fig:mse-results-sparse}, we see that \texttt{GDA-direct} outperformed other methods on all graphs with different sparsity.
These observations validate the effectiveness of \texttt{GDA-Direct}.

Next, we compare the efficiency of the proposed \texttt{GDA-Direct} to other sampling methods.
In Fig.\;\ref{fig:run-time}, we plotted running time of different sampling methods on random graphs ($p=0.1$) with different sizes $N$.
The sampling budget was set to $K=0.3N$.
Since \texttt{Random} method randomly samples from the graph with negligible  computational cost, we omitted its running time for clarity.
When the graph is large ($N \ge 300$), we see that both \texttt{GDA-Direct} and \texttt{GDA-Balance} run faster than other methods due to the linear time complexity of the GDA algorithm \cite{bai2020fast}.
Further, \texttt{GDA-Direct} ran faster than \texttt{GDA-Balance}; specifically, when the graph size is  $800$, \texttt{GDA-Direct} is 1.4 times faster than \texttt{GDA-Balance}.
This is because the pre-computation of $\delta$ and $\rho$ in \texttt{GDA-Direct} is more efficient than the balancing procedure in \texttt{GDA-Balance}.
Note that when the graph size was small ($N \le 100$), \texttt{E-Optimal} and \texttt{SP} were faster than the proposed \texttt{GDA-Direct}.
This is because \texttt{E-Optimal} and \texttt{SP} depend heavily on matrix operations, \eg, SVD and LOBPCG.
In Python, these matrix operations are optimized and parallelized, and thus, \texttt{E-Optimal} and \texttt{SP} have advantages for smaller graphs.
These observations further validate the efficiency and effectiveness of the proposed sampling algorithm on large graphs ($N \ge 300$).

\section{Conclusion}
\label{sec:conclude}
In this paper, we study the sampling problem on directed graphs.
We propose a graph signal reconstruction scheme using graph shift variation as the smoothness regularizer, and formulate a sampling problem under the E-optimality criterion.
For the formulated sampling problem, we propose a fast algorithm based on the Gershgorin disc alignment algorithm, which does not require eigendecomposition.
Experiment results on synthetic graphs show that the proposed sampling algorithm decreases the reconstruction MSE by at least 8.6\% and speeds up about 1.4 times compared to other baseline methods.

\vfill
\pagebreak

\bibliographystyle{IEEEbib2}
\bibliography{ref2}

\appendix
\section{Proof of Proposition 1}
\label{sec:appendix-proof-prop1}
To prove Proposition \ref{prop:main-lb}, we first have the following lemmas
\begin{lemma}
\label{lemma:complex-eig-le}
For a real matrix (not necessarily symmetric) $\M=\P\bLambda\P^{-1}$ and a non-negative invertible diagonal matrix $\S$ such that $\lambda^-_{\min}(\S\M\S^{-1}) \ge 0$ 
, we have
\begin{equation}
\frac{1}{\gamma^2(\M)}(\lambda^-_{\min}(\S\M\S^{-1}))^2 \le \lambda_{\min} (\M^\top \M),
\end{equation}
where $\gamma(\M) \triangleq \frac{\lambda_{\max}(\P^\top\P)}{\lambda_{\min}(\P^\top \P)} > 0$ is the condition number of matrix $\P$ and it is a function of matrix $\M$.
\end{lemma}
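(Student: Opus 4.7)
The plan is to recognize $\lambda_{\min}(\M^\top\M) = \sigma_{\min}^2(\M)$ and then lower-bound $\sigma_{\min}(\M)$ in two stages: first in terms of the smallest-magnitude eigenvalue of $\M$ via the eigendecomposition $\M=\P\bLambda\P^{-1}$, and then in terms of $\lambda^-_{\min}(\S\M\S^{-1})$ via the Gershgorin circle theorem.

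For the first stage, I would substitute $\x=\P\z$ in the Rayleigh quotient to rewrite
\[
\sigma_{\min}^2(\M) \;=\; \min_{\x\neq \0}\frac{\|\M\x\|^2}{\|\x\|^2} \;=\; \min_{\z\neq \0}\frac{\|\P\bLambda\z\|^2}{\|\P\z\|^2}.
\]
Bounding the numerator from below using $\|\P\bLambda\z\|^2\ge \lambda_{\min}(\P^\top\P)\,\|\bLambda\z\|^2$ and the denominator from above using $\|\P\z\|^2\le \lambda_{\max}(\P^\top\P)\,\|\z\|^2$, and noting that $\bLambda$ is diagonal (with the eigenvalues of $\M$ on the diagonal) so that $\|\bLambda\z\|^2\ge |\lambda_{\min}(\M)|^2\,\|\z\|^2$ where $|\lambda_{\min}(\M)|$ denotes the smallest-magnitude eigenvalue, I obtain
\[
\sigma_{\min}^2(\M) \;\ge\; \frac{\lambda_{\min}(\P^\top\P)}{\lambda_{\max}(\P^\top\P)}\,|\lambda_{\min}(\M)|^2 \;=\; \frac{1}{\gamma(\M)}\,|\lambda_{\min}(\M)|^2.
\]

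For the second stage, I would invoke the fact that the similarity transform $\S\M\S^{-1}$ shares its spectrum with $\M$. By GCT, every eigenvalue $\lambda$ of $\M$ lies in some Gershgorin disc of $\S\M\S^{-1}$, so $\mathrm{Re}(\lambda)\ge \lambda^-_{\min}(\S\M\S^{-1})$. Under the standing assumption $\lambda^-_{\min}(\S\M\S^{-1})\ge 0$, this forces $|\lambda|\ge \mathrm{Re}(\lambda)\ge \lambda^-_{\min}(\S\M\S^{-1})$, and therefore $|\lambda_{\min}(\M)|\ge \lambda^-_{\min}(\S\M\S^{-1})$. Substituting and using $\gamma(\M)\ge 1$ gives
\[
\sigma_{\min}^2(\M) \;\ge\; \frac{1}{\gamma(\M)}\bigl(\lambda^-_{\min}(\S\M\S^{-1})\bigr)^2 \;\ge\; \frac{1}{\gamma^2(\M)}\bigl(\lambda^-_{\min}(\S\M\S^{-1})\bigr)^2,
\]
which is the claim.

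The main subtlety will be handling the fact that $\M$ is not symmetric, so $\bLambda$ and $\P$ can be complex and the eigenvalues may fail to be real. The step $|\lambda|\ge \mathrm{Re}(\lambda)$ is precisely where the non-negativity hypothesis $\lambda^-_{\min}(\S\M\S^{-1})\ge 0$ is essential, since without it the Gershgorin bound would only constrain the real part (which could be negative and therefore smaller in absolute value than $|\lambda|$). All other manipulations involve real quadratic forms $\z^*\P^*\P\z$, and since $\P^*\P$ is Hermitian PSD the extremal Rayleigh bounds used above are valid even in the complex setting; the stated definition of $\gamma(\M)$ in terms of $\P^\top\P$ coincides with this Hermitian version when $\P$ is chosen real (e.g., via a real block-diagonal $\bLambda$ that groups complex conjugate pairs).
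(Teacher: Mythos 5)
Your proof is correct, and the Gershgorin half of your argument (similarity transform preserves the spectrum; every eigenvalue lies in some disc of $\S\M\S^{-1}$ whose center is real because $\S$ is diagonal, so $\mathrm{Re}(\lambda)\ge\lambda^-_{\min}(\S\M\S^{-1})\ge 0$ and hence $|\lambda|\ge\lambda^-_{\min}(\S\M\S^{-1})$) is exactly the paper's. Where you diverge is the link between the smallest-magnitude eigenvalue and $\sigma_{\min}(\M)$: the paper imports this as a black box, citing Theorem 1 of Ruhe's 1975 paper to get $|\lambda^a_{\min}(\M)|\le\gamma(\M)\,\sigma_{\min}(\M)$, whereas you re-derive it from scratch by the substitution $\x=\P\z$ in the Rayleigh quotient for $\M^\top\M$ and the sandwich $\lambda_{\min}(\P^*\P)\|\bLambda\z\|^2\le\|\P\bLambda\z\|^2$, $\|\P\z\|^2\le\lambda_{\max}(\P^*\P)\|\z\|^2$. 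This buys two things: the proof becomes self-contained, and it actually yields the sharper bound $|\lambda^a_{\min}(\M)|^2\le\gamma(\M)\,\lambda_{\min}(\M^\top\M)$ with a single factor of $\gamma(\M)=\lambda_{\max}(\P^\top\P)/\lambda_{\min}(\P^\top\P)$ rather than $\gamma^2(\M)$ — you then discard this improvement via $\gamma(\M)\ge 1$ to match the stated inequality, but it would propagate to a tighter constant $\gamma_{\max}$ (rather than $\gamma_{\max}^2$) in Proposition 1. You also handle a point the paper glosses over: when $\M$ has complex eigenvalues, $\P$ is complex and the quadratic forms must be read as $\P^*\P$ (or one must pass to the real block-diagonal form, where each $2\times2$ block is a scaled rotation so $\|\bLambda\z\|\ge\min_i|\lambda_i|\,\|\z\|$ still holds); your closing remark on this is the right fix, and it is good that you flagged it, since the paper's $\P^\top\P$ is not Hermitian for complex $\P$.
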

\begin{proof}
    let $\lambda^a_{\min}(\M) = a + b i \in \mathbb{C}$ be the eigenvalue of matrix $\M$ with the smallest absolute value, \ie, $0 \le \vert \lambda^a_{\min} (\M) \vert \le \vert \lambda(\M) \vert$. 
    Here, $i$ is the imaginary unit. 
    Since $\S\M\S^{-1}$ has the same eigenvalues as $\M$, $\lambda^a_{\min}(\M)$ is also the smallest eigenvalue of $\S\M\S^{-1}$ in terms of absolute value. 
    According to the Gershgorin disc theorem, eigenvalue $\lambda^a_{\min}(\M)$ must lie in at least one disc of $\S\M\S^{-1}$. 
    Since we have assume that $\lambda^-_{\min}(\S\M\S^{-1}) \ge 0$, we have $a = \text{Re}(\lambda_{\min}(\M)) \ge \lambda^-_{\min}(\S\M\S^{-1}) \ge 0$. Consequently, we have
    $$
    \lambda^-_{\min}(\S\M\S^{-1}) \le a \le \vert \lambda^a_{\min}(\M) \vert.
    $$
    With the eigendecomposition $\M = \P \bLambda \P^{-1}$, we define the condition number of $\P$ to be a function of matrix $\M$, \ie, $\gamma(\M) \triangleq \frac{\lambda_{\max}(\P^\top \P)}{\lambda_{\min}(\P^\top \P)} > 0$.
    According to Theorem 1 in \cite{ruhe1975closeness}, we have
    $$
    |\lambda^a_{\min}(\M)| \le \gamma(\M) \sigma_{\min}(\M) = \gamma(\M)\sqrt{\lambda_{\min}(\M^\top\M)},
    $$
    where $\sigma_{\min}$ is the smallest singular value of $\M$.
    Combining the above two inequality, we have
    $$
    0 \le \lambda^-_{\min}(\S\M\S^{-1}) \le \gamma(\M)\sqrt{\lambda_{\min}(\M^\top\M)}, \text{ and}
    $$
    $$
    \Rightarrow~ \frac{1}{\gamma^2(\M)}(\lambda^-_{\min}(\S\M\S^{-1}))^2 \le \lambda_{\min} (\M^\top \M).
    $$
    This ends the proof.
\end{proof}
\begin{lemma}
\label{lemma:suf-sml-eig}
For two symmetric matrices $\M_1, \M_2\in \mathbb{R}^{N\times N}$, if $ \M_1 - \M_2 \succeq 0$, we have $\lambda_{\min}(\M_1) \ge \lambda_{\min}(\M_2)$.
\end{lemma}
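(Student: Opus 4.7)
The plan is to invoke the Rayleigh--Ritz (Courant--Fischer) variational characterization of the smallest eigenvalue, which for any real symmetric matrix $\M$ states
\begin{equation*}
\lambda_{\min}(\M) = \min_{\x \in \mathbb{R}^N, \, \|\x\|_2 = 1} \x^\top \M \x.
\end{equation*}
This characterization is exactly tailored to the PSD ordering hypothesis $\M_1 - \M_2 \succeq \0$, which, by definition, means $\x^\top(\M_1 - \M_2)\x \ge 0$ for every $\x \in \mathbb{R}^N$, i.e.\ $\x^\top \M_1 \x \ge \x^\top \M_2 \x$ pointwise on the unit sphere.

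First I would state the Rayleigh--Ritz formula for both $\M_1$ and $\M_2$ (valid since both are symmetric, so their spectra are real). Next, for an arbitrary unit vector $\x$, I would chain the two facts
\begin{equation*}
\x^\top \M_1 \x \;\ge\; \x^\top \M_2 \x \;\ge\; \min_{\|\y\|_2=1} \y^\top \M_2 \y \;=\; \lambda_{\min}(\M_2),
\end{equation*}
using the PSD hypothesis for the first inequality and the definition of the minimum for the second. Since this lower bound is independent of $\x$, I can take the minimum over all unit $\x$ on the left-hand side to conclude
\begin{equation*}
\lambda_{\min}(\M_1) \;=\; \min_{\|\x\|_2 = 1} \x^\top \M_1 \x \;\ge\; \lambda_{\min}(\M_2),
\end{equation*}
which is the desired inequality.

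There is no substantive obstacle here: the lemma is an elementary monotonicity property of $\lambda_{\min}$ under the Loewner order, and the only thing I must be careful about is to note that symmetry of $\M_1$ and $\M_2$ is what guarantees real eigenvalues and validates the variational formula (the hypothesis $\M_1 - \M_2 \succeq \0$ is likewise a statement about symmetric matrices in the usual convention). No spectral decomposition, no Weyl inequality, and no appeal to Gershgorin is needed; the proof is a two-line application of Rayleigh--Ritz.
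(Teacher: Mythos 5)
Your proof is correct. The paper itself does not spell out an argument for this lemma---it simply defers to Proposition 2 of an external reference---so there is no in-paper proof to compare against, but your two-line Rayleigh--Ritz argument is the standard and complete way to establish this monotonicity of $\lambda_{\min}$ under the Loewner order: the hypothesis $\M_1 - \M_2 \succeq 0$ gives $\x^\top \M_1 \x \ge \x^\top \M_2 \x \ge \lambda_{\min}(\M_2)$ for every unit vector $\x$, and minimizing the left side over the unit sphere yields the claim. Your remark that symmetry is what licenses the variational characterization is the right thing to flag; nothing further is needed.
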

\begin{proof}
    We refer readers to Proposition 2 in \cite{dinesh2020} for the proof.
\end{proof}

\begin{lemma}
\label{lemma:ub-lambda-max}
For any diagonal matrix $\A$ such that $A_{i, i} \in \{0, 1\}$ and $\text{tr}(\A) = K$, we have
    $$
    \lambda_{\max} (\L_{rw}^\top \A + \A \L_{rw}) \le 3 + \max_{i} \sum_{j=1}^{K-1} \bar{W}_{[j], i},
    $$
    where $W_{[j], i}$ is the $j$-th largest element in the $i$-th column of $\bar{\W}$.
\end{lemma}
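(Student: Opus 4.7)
The plan is to apply the Gershgorin Circle Theorem (GCT) directly to $\M := \L_{rw}^\top \A + \A \L_{rw}$, which is symmetric since $(\L_{rw}^\top\A)^\top = \A\L_{rw}$. First I would write the entries of $\M$ explicitly using $\L_{rw} = \I - \bar{\W}$, together with the facts that there are no self-loops (so $\bar{W}_{i,i}=0$ and $(L_{rw})_{i,i}=1$) and that $\bar{\W}$ is row-stochastic ($\sum_j \bar{W}_{i,j}=1$). A short calculation gives $M_{i,i} = 2A_{i,i}$ on the diagonal and $M_{i,j} = -A_{i,i}\bar{W}_{i,j} - A_{j,j}\bar{W}_{j,i}$ for $i\neq j$, whence $|M_{i,j}| = A_{i,i}\bar{W}_{i,j} + A_{j,j}\bar{W}_{j,i}$ since both terms are non-negative. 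Because $\M$ is symmetric its eigenvalues are real, so GCT yields
$$
\lambda_{\max}(\M)\;\le\;\max_{i}\Bigl\{\,2A_{i,i} + A_{i,i}\sum_{j\neq i}\bar{W}_{i,j} + \sum_{j\neq i}A_{j,j}\bar{W}_{j,i}\,\Bigr\}.
$$

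Next I would bound each Gershgorin right-end by a case analysis on $A_{i,i}$. If $A_{i,i}=1$, then $\sum_{j\neq i}\bar{W}_{i,j}\le 1$ by row-stochasticity, and the budget $\text{tr}(\A)=K$ leaves exactly $K-1$ of the remaining $A_{j,j}$ equal to $1$, so $\sum_{j\neq i}\bar{W}_{j,i}A_{j,j}\le\sum_{j=1}^{K-1}\bar{W}_{[j],i}$; adding the center $2$ and the stochasticity contribution $1$ yields the right-end $3 + \sum_{j=1}^{K-1}\bar{W}_{[j],i}$. If $A_{i,i}=0$, the center drops to $0$ but the column sum now has at most $K$ non-zero terms, which I would bound by the top $K$ entries of column $i$; absorbing the extra entry with the trivial estimate $\bar{W}_{[K],i}\le 1$ (each row of $\bar{\W}$ sums to $1$, so every entry lies in $[0,1]$) gives $\sum_{j=1}^{K}\bar{W}_{[j],i}\le 1+\sum_{j=1}^{K-1}\bar{W}_{[j],i}\le 3+\sum_{j=1}^{K-1}\bar{W}_{[j],i}$. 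Taking the maximum over $i$ of the common upper bound produces the claim.

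The argument is essentially bookkeeping; the one subtlety, and the step I would take most care with, is that the two cases $A_{i,i}\in\{0,1\}$ produce an asymmetric count---the sampled case contributes $K-1$ column terms plus the row-stochasticity term $1$, whereas the unsampled case contributes up to $K$ column terms but no row-stochasticity term. The constant $+3$ is dictated by the sampled case ($2+1=3$), and it comfortably absorbs the unsampled case via $\bar{W}_{[K],i}\le 1$. No harder tool (e.g.\ Courant--Fischer or perturbation bounds) appears necessary.
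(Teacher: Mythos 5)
Your proof is correct and follows essentially the same route as the paper: write out the entries of the symmetric matrix $\L_{rw}^\top\A + \A\L_{rw}$ explicitly, then bound $\lambda_{\max}$ by the largest Gershgorin right-end, splitting into the sampled case (center $2$, radius $1+\sum_{j\in\cT,j\neq i}\bar{W}_{j,i}$) and the unsampled case (center $0$). Your treatment of the unsampled case is in fact slightly more careful than the paper's, which bounds a sum over the $K$ sampled indices by the top $K-1$ column entries without comment; your explicit use of $\bar{W}_{[K],i}\le 1$ to absorb the extra term into the constant $3$ closes that small imprecision cleanly.
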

\begin{proof}
    Let $\B = (\L_{rw}^\top \A + \A \L_{rw})$ and $\cT = \{i | A_{i, i} = 1\}$.
    Note that $(\L_{rw}^\top \A)^\top = \A \L_{rw}$ and
    $$
    (\A\L_{rw})_{i, j} = (\L_{rw}^\top \A)_{j, i} =
    \left\{
        \begin{array}{lc}
            (L_{rw})_{i, j} & \text{if } i \in \cT,\\
            0       & \text{otherwise}.
        \end{array}
    \right.
    $$
    $\B$ is a symmetric matrix, and $\lambda_{\max}(\B)$ is a real value. With the entry of $\A\L_{rw}$ and $\L_{rw}^\top \A$ above, we have
    $$
    B_{i, j} =
    \left\{
        \begin{array}{lc}
            (L_{rw})_{i, j} + (L_{rw})_{j, i} & \text{if } i, j \in \cT,\\
            (L_{rw})_{i, j}            & \text{if } i \in \cT \text{ and } j \notin \cT, \\
            (L_{rw})_{j, i}            & \text{if } j \in \cT \text{ and } i \notin \cT, \\
            0                   & i, j \notin \cT.
        \end{array}
    \right.
    $$
    To upper bound the largest eigenvalue $\lambda_{\max}(\B)$, we need to study the right ends of $\B$'s Gershgorin discs. For the $i$-th row of $\B$, if $i\in\cT$, we have $B_{i, i} = 2 (L_{rw})_{i, i} = 2$, and the center of the corresponding disc is $(2, 0)$. For the radius of the disc corresponding to the $i$-th row of $\B$ ($i\in \cT$), we have
    $$
    \begin{aligned}
        \sum_{j\neq i} \vert B_{i, j}\vert &= \sum_{j\in\cT, j\neq i} \vert (L_{rw})_{i, j} + (L_{rw})_{j, i} \vert + \sum_{j\notin\cT} \vert (L_{rw})_{i, j} \vert\\
        &=\sum_{j\in\cT, j\neq i} \vert -\bar{W}_{i, j} - \bar{W}_{j, i} \vert + \sum_{j\notin\cT} \vert - \bar{W}_{i, j} \vert\\
        &=\sum_{j\neq i} \bar{W}_{i, j} + \sum_{j\in\cT, j\neq{i}} \bar{W}_{j, i} = 1 + \sum_{j\in\cT, j\neq{i}} \bar{W}_{j, i}.\\
    \end{aligned}
    $$
    Thus, the right end of the $i$-th disc is $3 + \sum_{j\in\cT, j\neq{i}} \bar{W}_{j, i} \le 3 + \sum_{j=1}^{k-1} \bar{W}_{[j], i}$.
    Similarly, for the disc of the $i$-th row where $i\notin \cT$, the center of this disc is $(0, 0)$, and the radius is $\sum_{j\in\cT, j\neq{i}} \bar{W}_{j, i}$.
    Then, the right end of this disc is $\sum_{j\in\cT, j\neq{i}} \bar{W}_{j, i} \le \sum_{j=1}^{k-1} \bar{W}_{[j], i}$.
    From the Gershgorin circle theorem, the largest eigenvalue of $\B$ is smaller than the largest (upper bounds of) disc's right end. Therefore, we have
    $$
        \begin{aligned}
            \lambda_{\max} (\B) &\le \max_{i} \max \{ 3 + \sum_{j=1}^{k-1} \bar{W}_{[j], i}, \sum_{j=1}^{k-1} \bar{W}_{[j], i} \}\\
            &= 3 + \max_{i} \sum_{j=1}^{k-1} \bar{W}_{[j], i}.
        \end{aligned}
    $$
    This ends the proof.
\end{proof}


With the above three lemmas, we prove Proposition \ref{prop:main-lb} as follows.
\begin{proof}
    Consider the eigendecomposition $(\delta \A + \rho\L_{rw}) = \P \bLambda \P^{-1}$. As $\delta$ and $\rho$ are given constant, this eigendecomposition only depends on $\A$, and we denote $\gamma(\delta \A + \rho\L_{rw}) = \gamma(\A) = \frac{\lambda_{\max}(\P^\top \P)}{\lambda_{\min}(\P^\top \P)}$.
    As we assume that $\lambda^-_{\min}(\S(\delta \A + \rho\L_{rw})\S^{-1}) \ge 0$, according to Lemma \ref{lemma:complex-eig-le}, we have
    $$
    \begin{aligned}
        &\frac{(\lambda^-_{\min}(\S(\delta\A + \rho \L_{rw})\S^{-1}))^2}{\gamma^2(A)} \\
        &\le \lambda_{\min} \big((\delta\A + \rho \L_{rw})^\top (\delta\A + \rho \L_{rw})\big).
    \end{aligned}
    $$
    Furthermore, we denote $\gamma_{\max} = \max_{\A} \gamma(\A)$ over all possible diagonal matrix $\A$, such that $A_{i, i} = \{ 0, 1 \}$ and $\text{tr}(\A)= K$. Therefore, we have
    \begin{align}
        &\frac{(\lambda^-_{\min}(\S(\delta\A + \rho \L_{rw})\S^{-1}))^2}{\gamma^2_{\max}} \le \frac{(\lambda^-_{\min}(\S(\delta\A + \rho \L_{rw})\S^{-1}))^2}{\gamma^2(A)} \nonumber\\
        &\le \lambda_{\min} \big((\delta\A + \rho \L_{rw})^\top (\delta\A + \rho \L_{rw})\big).\label{eqn:prof-prop-half}
    \end{align}

    Next, we aim to show that $\lambda_{\min} (\A + \mu \L_{rw}^\top \L_{rw}) \ge  \lambda_{\min} \big((\delta\A + \rho \L_{rw})^\top (\delta\A + \rho \L_{rw})\big)$. To achieve this, with Lemma \ref{lemma:suf-sml-eig}, we aim to show that for any unit vector $\u$, the quadratic form
    \begin{equation}
        \label{eqn:quad-form}
        f = \u^\top [(\A + \mu \L_{rw}^\top \L_{rw}) - (\delta\A + \rho \L_{rw})^\top (\delta\A + \rho \L_{rw})] \u \ge 0.
    \end{equation}

    Note that for matrix $\A$, we have $\A = \A^\top$, and $\A$ is an idempotent matrix, \ie, $\A^\top \A = \A\A = \A$. Consequently, we have
    $$
    (\delta\A + \rho \L_{rw})^\top (\delta\A + \rho \L_{rw}) = \delta^2 \A + \rho^2 \L_{rw}^\top \L_{rw} + \delta\rho(\L_{rw}^\top \A + \A\L_{rw}).
    $$
    Plugging the above equation into the quadratic form (\ref{eqn:quad-form}), we have
    \begin{align}
        &f=\u^\top [(\A + \mu \L_{rw}^\top \L_{rw}) - (\delta\A + \rho \L_{rw})^\top (\delta\A + \rho \L_{rw})] \u \nonumber\\
        = &\underbrace{(1 - \delta^2) \u^\top \A \u + (\mu - \rho^2) \u^\top \L_{rw}^\top \L_{rw} \u}_{f_1} \nonumber\\
        &- \underbrace{\delta\rho \u^\top (\L_{rw}^\top \A + \A\L_{rw}) \u}_{f_2}. \label{eqn:f}
    \end{align}
    When $\mu \le 1$, if $\delta = \sqrt{1 - \epsilon \cdot \mu}$, the first two terms in (\ref{eqn:f}), \ie, $f_1$, becomes
    $$
    \begin{aligned}
        f_1 &= (1 - \delta^2) \u^\top \A \u + (\mu - \rho^2) \u^\top \L_{rw}^\top \L_{rw} \u \\
        &= \epsilon\mu \u^\top \A \u + (\mu - \rho^2) \u^\top \L_{rw}^\top \L_{rw} \u \\
        &= \epsilon \rho^2 \cdot \u^\top \A \u + (\mu - \rho^2) \cdot \u^\top (\L_{rw}^\top \L_{rw} + \epsilon \A) \u \\
        &\ge (\mu - \rho^2) \cdot \u^\top (\L_{rw}^\top \L_{rw} + \epsilon \A) \u.
    \end{aligned}
    $$
    With the Rayleigh quotient theorem, we have $\u^\top (\L_{rw}^\top \L_{rw} + \epsilon \A) \u \ge \lambda_{\min} (\L_{rw}^\top \L_{rw} + \epsilon \A)$. Since $\A$ is the optimization variable, we have
    \begin{equation}
        \label{eqn:lb-f1}
        f_1 \ge (\mu - \rho^2) \cdot \big(\min_{\A} \lambda_{\min} (\L_{rw}^\top \L_{rw} + \epsilon \A)\big).
    \end{equation}
    For $f_2$ in (\ref{eqn:f}), according to the Rayleigh quotient theorem and lemma \ref{lemma:ub-lambda-max}, we have
    \begin{align}
        f_2 &= \delta\rho \u^\top (\L_{rw}^\top \A + \A\L_{rw}) \u \le \delta\rho \lambda_{\max}(\L_{rw}^\top \A + \A\L_{rw}) \nonumber\\
        &\le \delta\rho (3 + \max_{i} \sum_{j=1}^{K-1} \bar{W}_{[j], i}).\label{eqn:ub-f2}
    \end{align}
    Plugging (\ref{eqn:lb-f1}) and (\ref{eqn:ub-f2}) into (\ref{eqn:f}), we have
    $$
    f \ge (\mu - \rho^2) \cdot \big(\min_{\A} \lambda_{\min} (\L_{rw}^\top \L_{rw} + \epsilon \A)\big) - \delta\rho (3 + \max_{i} \sum_{j=1}^{k-1} \bar{W}_{[j], i}).
    $$
    When $\rho = \frac{-\delta c + \sqrt{\delta^2 c^2 + 4\mu}}{2}$, the right hand side of the above inequality is zero. Consequently, we have $f \ge 0$ when $\mu \le 1$.

    When $\mu > 1$, if $\rho = \sqrt{\mu - \epsilon}$, we have
    $$
    \begin{aligned}
        f_1 &= (1 - \delta^2) \u^\top \A \u + (\mu - \rho^2) \u^\top \L_{rw}^\top \L_{rw} \u \\
        &= (1 - \delta^2) \u^\top \A \u + \epsilon \cdot \u^\top \L_{rw}^\top \L_{rw} \u \\
        &= \epsilon \delta^2 \cdot \u^\top \L_{rw}^\top \L_{rw} \u + (1 - \delta^2) \u^\top (\A + \epsilon \L_{rw}^{\top} \L_{rw}) \u\\
        &\ge (1 - \delta^2) \u^\top (\A + \epsilon \L_{rw}^{\top} \L_{rw}) \u \\
        &\ge (1 - \delta^2) \cdot \lambda_{\min} (\A + \epsilon \L_{rw}^{\top} \L_{rw})\\
        &\ge (1 - \delta^2) \cdot \big(\min_{\A} \lambda_{\min} (\A + \epsilon \L_{rw}^{\top} \L_{rw}) \big).
    \end{aligned}
    $$
    Combining the upper bound of $f_2$ in (\ref{eqn:ub-f2}), we have
    $$
    f \ge (1 - \delta^2) \cdot \big(\min_{\A} \lambda_{\min} (\A + \epsilon \L_{rw}^{\top} \L_{rw}) \big) - \delta\rho (3 + \max_{i} \sum_{j=1}^{k-1} \bar{W}_{[j], i}).
    $$
    When $\delta = \frac{-\rho c + \sqrt{\rho^2 c^2 + 4}}{2}$, the right hand side of the above inequality is zero. Consequently, we have $f \ge 0$ when $\mu > 1$.

    In summary, with $\delta$ and $\rho$ in (\ref{eqn:param-delta-rho}) and (\ref{eqn:param-delta-rho-mu-ge1}), we can always ensure that $f \ge 0$.
    According to Lemma \ref{lemma:suf-sml-eig}, we have $\lambda_{\min} (\A + \mu \L_{rw}^\top \L_{rw}) \ge  \lambda_{\min} \big((\delta\A + \rho \L_{rw})^\top (\delta\A + \rho \L_{rw})\big)$. Consequently, combining (\ref{eqn:prof-prop-half}), we have 
    $$
    \frac{(\lambda^-_{\min}(\S(\delta\A + \rho \L_{rw})\S^{-1}))^2}{\gamma^2_{\max}} \le \lambda_{\min} (\A + \mu \L_{rw}^\top \L_{rw})
    $$
    This ends the proof.
\end{proof}


\end{document}